\begin{document}
\frontmatter          
\pagestyle{headings}  

\mainmatter              
\title{Efficient AUC Optimization for Information Ranking Applications}
\titlerunning{AUC Optimization for Ranking}  
%
\author{Sean J. Welleck}
\authorrunning{Welleck} 
%
\tocauthor{Welleck, Sean}
\institute{IBM, USA\\
\email{swelleck@us.ibm.com}}
\maketitle              

\begin{abstract}
Adequate evaluation of an information retrieval system to estimate future performance is a crucial task. 
Area under the ROC curve (AUC) is widely used to evaluate the generalization of a retrieval system. 
However,
the objective function optimized in many retrieval systems is the error rate and not the AUC value.
This paper provides an efficient and effective non-linear approach to optimize AUC using additive regression trees, with a special emphasis on the use of multi-class AUC (MAUC) because multiple relevance levels are widely used in many ranking applications. 
Compared to a conventional linear approach, the performance of the non-linear approach is comparable on binary-relevance benchmark datasets and is better on  multi-relevance benchmark datasets. 

\keywords{machine learning, learning to rank, evaluation}
\end{abstract}

\section{Introduction}
In various information retrieval applications, a system may need to provide a ranking of candidate items that satisfies a criteria. 
For instance, a search engine must produce a list of results, ranked by their relevance to a user query. 
The relationship between items (e.g. documents) represented as feature vectors and their rankings (e.g. based on relevance scores) is often complex, so machine learning is used to learn a function that generates a ranking given a list of items.

The ranking system is evaluated using metrics that reflect certain goals for the system. 
The choice of metric, as well as its relative importance, varies by application area. 
For instance, a search engine may evaluate its ranking system with Normalized Discounted Cumulative Gain (NDCG), while a question-answering system evaluates its ranking using precision at 3; a high NDCG score is meant to indicate results that are relevant to a user's query, while a high precision shows that a favorable amount of correct answers were ranked highly. 
Other common metrics include Recall @ k, Mean Average Precision (MAP), and Area Under the ROC Curve (AUC). 

Ranking algorithms may optimize error rate as a proxy for improving metrics such as AUC, or may optimize the metrics directly. 
However, typical metrics such as NDCG and AUC are either flat everywhere or non-differentiable with respect to model parameters, making direct optimization with gradient descent difficult.

LambdaMART\cite{Burges2010f} is a ranking algorithm that is able to avoid this issue and directly optimize non-smooth metrics. 
It uses a gradient-boosted tree model and forms an approximation to the gradient whose value is derived from the evaluation metric. 
LambdaMART has been empirically shown to find a local optimum of NDCG, Mean Reciprocal Rank, and Mean Average Precision \cite{Donmez2009o}. 
An additional attractive property of LambdaMART is that the evaluation metric that LambdaMART optimizes is easily changed; the algorithm can therefore be adjusted for a given application area. 
This flexibility makes the algorithm a good candidate for a production system for general ranking, as using a single algorithm for multiple applications can reduce overall system complexity.

However, to our knowledge LambdaMART's ability to optimize AUC has not been explored and empirically verified in the literature. 
In this paper, we propose extensions to LambdaMART to optimize AUC and multi-class AUC, and show that the extensions can be computed efficiently. 
To evaluate the system, we conduct experiments on several binary-class and multi-class benchmark datasets. 
We find that LambdaMART with the AUC extension performs similarly to an SVM baseline on binary-class datasets, and LambdaMART with the multi-class AUC extension outperforms the SVM baseline on multi-class datasets.

\section{Related Work}

This work relates to two areas: LambdaMART and AUC optimization in ranking. LambdaMART was originally proposed in \cite{Wu2010a} and is overviewed in \cite{Burges2010f}. 
The LambdaRank algorithm, upon which LambdaMART is based, was shown to find a locally optimal model for the IR metrics NDCG@10, mean NDCG, MAP, and MRR \cite{Donmez2009o}. 
Svore \textit{et. al} \cite{Svore2011l} propose a modification to LambdaMART that allows for simultaneous optimization of NDCG and a measure based on click-through rate. 

Various approaches have been developed for optimizing AUC in binary-class settings.
Cortes and Mohri \cite{Cortes2004a} show that minimum error rate training may be insufficient for optimizing AUC, and demonstrate that the RankBoost algorithm globally optimizes AUC. 
Calders and Jaroszewicz \cite{Calders2007k} propose a smooth polynomial approximation of AUC that can be optimized with a gradient descent method. 
Joachims \cite{Joachims2005a} proposes an SVM method for various IR measures including AUC, and evaluates the system on text classification datasets. 
The SVM method is used as the comparison baseline in this paper.

\section{Ranking Metrics}
We will first provide a review of the metrics used in this paper. 
Using document retrieval as an example, consider $n$ queries $Q_{1}...Q_{n}$, and let $n(i)$ denote the number of documents in query $Q_i$. 
Let $d_{ij}$ denote document $j$ in query $Q_i$, where $i \in {1, ..., n}$, $j \in {1 ... n(i)}$. 

\subsection{Contingency Table Metrics}
Several IR metrics are derived from a model's contingency table, which contains the four entries True Positive (TP), False Positive (FP), False Negative (FN), and True Negative (TN):
\begin{center}
\begin{tabular}{|c|c|c|}
\hline                        & $y = \ell_p$ & $y = \ell_n$ \\
\hline $f(x) = \ell_p$ & TP                & FP                \\ 
\hline $f(x) = \ell_n$ & FN                & TN                \\ 
\hline 
\end{tabular} 
\end{center}
where $y$ denotes an example's label, $f(x)$ denotes the predicted label, $\ell_p$ denotes the class label considered positive, and $\ell_n$ denotes the class label considered negative. 

Measuring the precision of the first $k$ ranked documents is often important in ranking applications. 
For instance, $Precision@1$ is important for question answering systems to evaluate whether the system's top ranked item is a correct answer. 
Although precision is a metric for binary class labels, many ranking applications and standard datasets have multiple class labels. 
To evaluate precision in the multi-class context we use Micro-averaged Precision and Macro-averaged Precision, which summarize precision performance on multiple classes \cite{Manning2008I}. 

\subsubsection{Micro-averaged Precision}
Micro-averaged Precision pools the contingency tables across classes, then computes precision using the pooled values:
\begin{equation}
Precision_{micro}=\frac{\sum_{c=1}^{C}TP_c}{\sum_{c=1}^{C}TP_c+FP_c}
\end{equation} 
where $C$ denotes the number of classes, $TP_c$ is the number of true positives for class $c$, and $FP_c$ is the number of false positives for class $c$. 

$Precision_{micro}@k$ is measured by using only the first $k$ ranked documents in each query:
\begin{equation}
Precision_{micro}@k=\frac{1}{n}\sum_{i=1}^n\frac{ \sum_{c=1}^C \left| \lbrace d_{ij} | y_{j}=c, j\in \lbrace 1,..., k\rbrace \rbrace \right|}{(C)(k)}
\end{equation}
Micro-averaged precision indicates performance on prevalent classes, since prevalent classes will contribute the most to the $TP$ and $FP$ sums.

\subsubsection{Macro-averaged Precision}
Macro-averaged Precision is a simple average of per-class precision values:
\begin{equation}
Precision_{macro}=\frac{1}{C}\sum_{c=1}^C\frac{TP_c}{TP_c+FP_c}
\end{equation} 
Restricting each query's ranked list to the first $k$ documents gives:
\begin{equation}
Precision_{macro}@k=\frac{1}{C}\sum_{c=1}^C\sum_{i=1}^n\frac{ \left| \lbrace d_{ij} | y_{j}=c, j\in \lbrace 1,..., k\rbrace \rbrace \right|}{k}
\end{equation} 
Macro-averaged precision indicates performance across all classes regardless of prevalence, since each class's precision value is given equal weight.

\subsubsection{AUC}
AUC refers to the area under the ROC curve. The ROC curve plots True Positive Rate $(TPR = \frac{TP}{TP + FN}) $ versus False Positive Rate $(FPR = \frac{FP}{FP + TN})$, with $TPR$ appearing on the y-axis, and $FPR$ appearing on the x-axis.

Each point on the ROC curve corresponds to a contingency table for a given model. 
In the ranking context, the contingency table is for the ranking cutoff $k$; the curve shows the $TPR$ and $FPR$ as $k$ changes. 
A model is considered to have better performance as its ROC curve shifts towards the upper left quadrant. 
The AUC measures the area under this curve, providing a single metric that summarizes a model's ROC curve and allowing for easy comparison.

We also note that the AUC is equivalent to the Wilcoxon-Mann-Whitney statistic \cite{Cortes2004a} and can therefore be computed using the number of correctly ordered document pairs. 
Fawcett \cite{Fawcett2006a} provides an efficient algorithm for computing AUC.

\subsubsection{Multi-Class AUC}

The standard AUC formulation is defined for binary classification. 
To evaluate a model using AUC on a dataset with multiple class labels, AUC can be extended to multi-class AUC (MAUC). 

We define the \textit{class reference} AUC value $AUC(c_i)$ as the AUC when class label $c_i$ is viewed as positive and all other labels as negative. The multi-class AUC is then the weighted sum of class reference AUC values, where each class reference AUC is weighted by the proportion of the dataset examples with that class label, denoted $p(c_i)$ \cite{Fawcett2006a}:
\begin{equation} \label{eq:mauc}
MAUC=\sum_{i=1}^C AUC(c_i)*p(c_i)\ .
\end{equation}
Note that the class-reference AUC of a prevalent class will therefore impact the MAUC score more than the class-reference AUC of a rare class.

\section {$\lambda$-Gradient Optimization of the MAUC function}

We briefly describe LambdaMART's optimization procedure here and refer the reader to \cite{Burges2010f} for a more extensive treatment.
LambdaMART uses a gradient descent optimization procedure
that only requires the gradient, rather than the objective function, to be defined. 
The objective function can in principal be left undefined, since only the gradient is required to perform gradient descent. 
Each gradient approximation, known as a $\lambda$-gradient, focuses on document pairs $(d_i, d_j)$ of conflicting relevance values (document $d_i$ more or less relevant than document $d_j$):

\begin{equation}
	\lambda_i = \sum_{j\in (d_i, d_j) \mid \ell_i \neq \ell_j}\lambda_{ij}\qquad \qquad
	\lambda_{ij} = S_{ij} \left| \Delta M_{IR_{ij}}	\frac{ \partial{C_{ij} } } { \partial{o_{ij}  } }  \right|
\end{equation}
with $S_{ij} = 1$ when $l_i > l_j$ and $-1$ when $l_j < l_i$.

The $\lambda$-gradient includes the change in IR metric, $\Delta M_{IR_{ij}}$,
from swapping the rank positions of the two documents, discounted by a function of the score difference between the documents. 

For a given sorted order of the documents, 
the objective function is simply a weighted version of the RankNet \cite{Burges2005RankNet} cost function. 
The RankNet cost is a pairwise cross-entropy cost applied to the logistic of the difference of the model scores. 
If document $d_i$, with score $s_i$, is to be ranked higher than document $d_j$, with score $s_j$, then the RankNet cost can be written as follows:
\begin{equation}
C(o_{ij}) = o_{ij} + \log( 1 + e^{o_{ij}} ) 
\end{equation}
where $o_{ij} = s_j - s_i$ is the score difference of a pair of documents in a query.
The derivative of the RankNet cost according to the difference in score is
\begin{equation}
\frac{ \partial{C_{ij} } } { \partial{o_{ij}  } } = \frac{1}{ (1+ e^{o_{ij}}) }\ .
\end{equation}

The optimization procedure using $\lambda$-gradients was originally defined using $ \Delta NDCG $ as the $\Delta M_{IR}$ term in order to optimize NDCG. 
$ \Delta MAP $ and $ \Delta MRR $ were also used to define effective $\lambda$-gradients for MAP and MRR, respectively. 
In this work, we adopt the approach of replacing the $\Delta M_{IR}$ term to define $ \lambda $-gradients for AUC and multi-class AUC.

\subsection{$\lambda $-gradients for AUC and multi-class AUC}
\subsubsection{$\lambda$-$AUC$}
Defining the $\lambda$-gradient for AUC requires deriving a formula for $\Delta AUC_{ij}$ that can be efficiently computed. 
Efficiency is important since in every iteration, the term is computed for $O(n(i)^2)$ document pairs for each query $Q_i$. 

To derive the $\Delta AUC_{ij}$ term, we begin with the fact that AUC is equivalent to the Wilcoxon-Mann-Whitney statistic \cite{Cortes2004a}. 
For documents $d_{p_1},...,d_{p_m}$ with positive labels and documents $d_{n_1}, ..., d_{n_n}$ with negative labels, we have:

\begin{equation}
AUC=\frac{\sum_{i=1}^m\sum_{j=1}^n I(f(d_{p_i})>f(d_{n_j}))}{mn}\ .
\end{equation}

The indicator function $I$ is $1$ when the ranker assigns a score to a document with a positive label that is higher than the score assigned to a document with a negative label. 
Hence the numerator is the number of correctly ordered pairs, and we can write \cite{Joachims2005a}:

\begin{equation}
AUC = \frac{CorrectPairs}{mn}
\end{equation}
where
\begin{equation}
CorrectPairs=\left| \lbrace(i, j) : (\ell_i > \ell_j)\ and\ (f(d_i) > f(d_j))\rbrace\right|\ .
\end{equation}
Note that a pair with equal labels is \textit{not} considered a correct pair, since a document pair $(d_i, d_j)$ contributes to $CorrectPairs$ if and only if $d_i$ is ranked higher than $d_j$ in the ranked list induced by the current model scores, and $\ell_i > \ell_j$. 

We now derive a formula for computing the $\Delta AUC_{ij}$ term in $O(1)$ time, given the ranked list and labels. 
This avoids the brute-force approach of counting the number of correct pairs before and after the swap, in turn providing an efficient way to compute a $\lambda $-gradient for $AUC$. 
Specifically, we have:

\begin{theorem}
Let $d_1,...,d_{m+n}$ be a list of documents with $m$ positive labels and $n$ negative labels, denoted $\ell_1,...,\ell_{m+n}$, with $\ell_i \in \lbrace 0, 1 \rbrace$. For each document pair $(d_i, d_j),\ i, j \in \lbrace1, ..., m+n\rbrace$,
\begin{equation}\label{eq:deltaauc}
\Delta AUC_{ij} = \frac{(\ell_j - \ell_i)(j - i)}{mn}\ .
\end{equation}
\end{theorem}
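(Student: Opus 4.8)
The plan is to work directly from the identity $AUC = CorrectPairs/(mn)$ established above and to track precisely which ordered document pairs gain or lose ``correctly ordered'' status when the documents at positions $i$ and $j$ trade places. Since $m$ and $n$ are unaffected by a swap, it suffices to compute $\Delta CorrectPairs$ (the post-swap count minus the pre-swap count) and divide by $mn$. Because the claimed formula is symmetric in $i$ and $j$ and vanishes when $i=j$, I would assume without loss of generality that $i<j$, so that $d_i$ is initially ranked above $d_j$ and after the swap $d_j$ is ranked above $d_i$ while every other document stays in place.

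The first step is the combinatorial observation that the only pairs whose internal order changes are the pair $\{d_i,d_j\}$ itself and, for each position $k$ with $i<k<j$, the two pairs $\{d_i,d_k\}$ and $\{d_j,d_k\}$. A document at a position below $i$ or above $j$ keeps both $d_i$ and $d_j$ on the same side of it, and any pair disjoint from $\{d_i,d_j\}$ is untouched; so these contribute nothing to $\Delta CorrectPairs$. The second step is a one-line binary-label fact: a pair $\{d_a,d_b\}$ in which $d_a$ is the higher-ranked member contributes $[\ell_a>\ell_b]$ to $CorrectPairs$, so reversing which member is higher changes that contribution by $[\ell_b>\ell_a]-[\ell_a>\ell_b]=\ell_b-\ell_a$ when $\ell_a,\ell_b\in\{0,1\}$.

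The third step assembles these pieces. For $\{d_i,d_j\}$ the higher-ranked member changes from $d_i$ to $d_j$, contributing $\ell_j-\ell_i$. For each intermediate $k$, the higher-ranked member of $\{d_i,d_k\}$ changes from $d_i$ to $d_k$ (contributing $\ell_k-\ell_i$) and the higher-ranked member of $\{d_j,d_k\}$ changes from $d_k$ to $d_j$ (contributing $\ell_j-\ell_k$); these sum to $\ell_j-\ell_i$, independently of $\ell_k$. Summing over the $j-i-1$ intermediate positions and adding the $\{d_i,d_j\}$ term gives $\Delta CorrectPairs=(j-i-1)(\ell_j-\ell_i)+(\ell_j-\ell_i)=(j-i)(\ell_j-\ell_i)$, hence $\Delta AUC_{ij}=(\ell_j-\ell_i)(j-i)/(mn)$.

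The main obstacle — indeed essentially the only point needing care — is the bookkeeping of the first step: confirming that documents outside the span $\{i,\dots,j\}$ and pairs not involving $d_i$ or $d_j$ really are inert, and getting the direction of each flip right for the intermediate pairs. Once that is nailed down, the cancellation that makes the intermediate contribution independent of $\ell_k$ is immediate from the binary-label identity in the second step, and the conclusion follows from a single summation.
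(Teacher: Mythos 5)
Your proof is correct, and it takes a genuinely different route from the paper's. The paper's bookkeeping is by \emph{positions}: it treats the swap as exchanging the labels sitting at positions $i$ and $j$, and then asks, for each pair of positions, whether its ``correct'' status changes. This forces a separate lemma for boundary pairs (those with one index outside $[i,j]$), whose status \emph{can} change but whose changes cancel pairwise, followed by a case analysis on the four combinations of $(\ell_i,\ell_j)$ to count the net change inside the interval. Your bookkeeping is by \emph{documents}: you identify exactly which unordered document pairs reverse their relative order (only $\{d_i,d_j\}$ and, for each strictly intermediate $k$, the pairs $\{d_i,d_k\}$ and $\{d_j,d_k\}$ --- all other pairs, including those involving documents outside the span, are genuinely inert, so no cancellation lemma is needed), and then replace the paper's case analysis with the single algebraic identity $[\ell_b>\ell_a]-[\ell_a>\ell_b]=\ell_b-\ell_a$ for binary labels, after which the intermediate labels $\ell_k$ telescope away. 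What your approach buys is economy and uniformity: no boundary cases, no label case split, and the independence from the interior labels falls out of one line of algebra. What the paper's approach buys is that its interior count is phrased directly in terms of the number of $0$-labeled and $1$-labeled documents in $[i,j]$, which makes the magnitude $j-i$ of the change somewhat more visually explicit. Both arguments are valid; yours is arguably the cleaner derivation of the same formula (and, incidentally, your sign convention $(\ell_j-\ell_i)(j-i)$ agrees with the theorem statement, whereas the paper's own concluding display flips the sign relative to its Lemma 2 --- an apparent typo there).
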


\begin{proof}
To derive this formula, we start with
\begin{equation}
\Delta AUC_{ij} = \frac{CP_{swap} - CP_{orig}}{mn}
\end{equation}
where $CP_{swap}$ is the value of $CorrectPairs$ after swapping the scores assigned to documents $i$ and $j$, and $CP_{orig}$ is the value of $CorrectPairs$ prior to the swap. 
Note that the swap corresponds to swapping positions of documents $i$ and $j$ in the ranked list.
The numerator of $\Delta AUC_{ij}$ is the change in the number of correct pairs due to the swap.  
The following lemma shows that we only need to compute the change in the number of correct pairs \textit{for the pairs of documents within the interval [i,j]} in the ranked list.

\begin{lemma}
Let $(d_a, d_b)$ be a document pair where at least one of $(a, b) \notin [i, j]$. Then after swapping documents $(d_i, d_j)$, the pair correctness of $(d_a, d_b)$ will be left unchanged or negated by another pair.
\end{lemma}
\begin{proof}
Without loss of generality, assume $a < b$. There are five cases to consider.

\textbf{case $a\notin [i, j]$, $b\notin [i, j]$}: Then the pair $(d_a, d_b)$ does not change due to the swap, therefore its pair correctness does not change.

Note that unless one of $a$ or $b$ is an endpoint $i$ or $j$, the pair $(d_a, d_b)$ does not change. Hence we now assume that one of $a$ or $b$ is an endpoint $i$ or $j$.

\textbf{case $a < i$, $b = i$}: The pair correctness of $(d_a, d_b)$ will change if and only if $\ell_a = 1,\ \ell_b=1,\ \ell_j=0$ prior to the swap. 
But then the pair correctness of $(d_i, d_j)$ will change from correct to not correct, canceling out the change (see Fig. \ref{figure:swap}).

\begin{figure}
\centering
\begin{tikzpicture}
\draw (0,0) -- (0,4);
\draw (1,0) -- (1,4);

\draw (2.5,0) -- (2.5,4);
\draw (3.5,0) -- (3.5,4);

\draw[loosely dotted] (0, 1.25) -- (1, 1.25);
\draw[loosely dotted] (0, 2.75) -- (1, 2.75);

\draw[loosely dotted] (2.5, 1.25) -- (3.5, 1.25);
\draw[loosely dotted] (2.5, 2.75) -- (3.5, 2.75);

\node[draw=none] at (- 0.25, 2.75) {i};
\node[draw=none] at (- 0.25, 1.25) {j};

\node[draw=none] at (2.25, 2.75) {j};
\node[draw=none] at (2.25, 1.25) {i};

\node[draw=none] at (1.25, 3.5) {a};
\node[draw=none] at (1.25, 2.75) {b};

\node[draw=none] at (3.75, 3.5) {a};
\node[draw=none] at (3.75, 2.75) {b};

\node[draw=none] at (0.5, 3.5) {1};   
\node[draw=none] at (0.5, 2.75) {1}; 
\node[draw=none] at (0.5, 1.25) {0}; 

\node[draw=none] at (3.0, 3.5) {1};   
\node[draw=none] at (3.0, 2.75) {0}; 
\node[draw=none] at (3.0, 1.25) {1}; 

\draw[->] (1.25, 2.0) -- (2.25, 2.0);
\node[draw=none] at (1.75, 2.15) {swap};
\end{tikzpicture}
\caption{Document swap for case $a < i$, $b=i$, with $\ell_a = 1,\ \ell_b=1,\ \ell_j=0$}\label{figure:swap}
\end{figure}
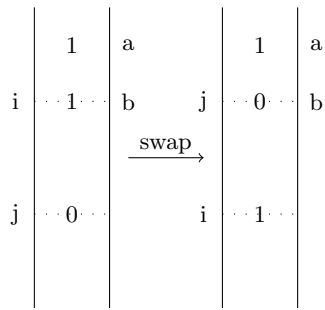
\textbf{case $a < i$, $b = j$}: Then the pair correctness of $(d_a, d_b)$ will change if and only if $\ell_a = 1,\ \ell_b=1,\ \ell_i=0$ prior to the swap. 
But then the pair correctness of $(d_a, d_i)$ will change from correct to not correct, canceling out the change.

\textbf{case $a = i$, $b > j$}: Then pair correctness of $(d_a, d_b)$ will change if and only if $\ell_a = 0,\ \ell_b=0,\ \ell_j=1$ prior to the swap. 
But then the pair correctness of $(d_j, d_b)$ will change from correct to not correct, canceling out the change.

\textbf{case $a = j$, $b > j$}: Then pair correctness of $(d_a, d_b)$ will change if and only if $\ell_a = 0,\ \ell_b=0,\ \ell_i=1$ prior to the swap. 
But then the pair correctness of $(d_i, d_b)$ will change from correct to not correct, canceling out the change.

Hence in all cases, either the pair correctness stays the same, or the pair $(d_a, d_b)$ changes from not correct to correct and an additional pair changes from correct to not correct, thus canceling out the change with respect to the total number of correct pairs after the swap. \qed
\end{proof}
Lemma 1 shows that the difference in correct pairs $CP_{swap} - CP_{orig}$ is equivalent to 
$CP_{swap_{[i, j]}} - CP_{orig_{[i, j]}}$, namely the change in the number of correct pairs within the interval [i,j]. 
Lemma 2 tells us that this value is simply the length of the interval [i,j].

\begin{lemma}
Assume $i < j$. Then
\begin{equation}
CP_{swap_{[i, j]}} - CP_{orig_{[i, j]}} = (\ell_j - \ell_i) (j - i)\ .
\end{equation}
\end{lemma}
\begin{proof}
There are three cases to consider.

\textbf{case $\ell_i=\ell_j$}: The number of correct pairs will not change since no document labels change due to the swap. 
Hence $CP_{swap_{[i, j]}} - CP_{orig_{[i, j]}} = 0 = (\ell_j - \ell_i) (j - i)$.

\textbf{case $\ell_i=1$, $\ell_j=0$}: Before swapping, each pair $(i, k)$, $i<k\leq j$ such that $\ell_k=0$ is a correct pair. 
After the swap, each of these pairs is not a correct pair. 
There are $n_{l_0[i,j]}$ such pairs, namely the number of documents in the interval $[i,j]$ with label $0$.

Each pair $(k, j)$, $i\leq k < j$ such that $\ell_k=1$ is a correct pair before swapping, and not correct after swapping. 
There are $n_{l_1[i,j]}$ such pairs, namely the number of documents in the interval $[i,j]$ with label $1$.

Every other pair remains unchanged, therefore 
\begin{equation}
n_{l_0[i,j]} + n_{l_1[i,j]}=j - i
\end{equation} 
pairs changed from correct to not correct, corresponding to a decrease in the number of correct pairs. 
Hence we have:
\begin{center}
$CP_{swap_{[i, j]}} - CP_{orig_{[i, j]}} = - (j - i) = (\ell_j - \ell_i) (j - i)$.
\end{center}

\textbf{case $\ell_i=0$, $\ell_j=1$}: Before swapping, each pair $(i, k)$, $i<k\leq j$ such that $\ell_k=0$ is not a correct pair. 
After the swap, each of these pairs is a correct pair. 
There are $n_{l_0[i,j]}$ such pairs, namely the number of documents in the interval $[i,j]$ with label $0$.

Each pair $(k, j)$, $i\leq k < j$ such that $\ell_k=1$ is not a correct pair before swapping, and is correct after swapping. 
There are $n_{l_1[i,j]}$ such pairs, namely the number of documents in the interval $[i,j]$ with label $1$.

Each pair $(i, k)$, $i<k\leq j$ such that $\ell_k=1$ remains not correct. 
Each pair $(k, j)$, $i\leq k< j$ such that $\ell_k=0$ remains not correct. 
Every other pair remains unchanged. Therefore 
\begin{equation}
n_{l_0[i,j]} + n_{l_1[i,j]}=j - i
\end{equation}
pairs changed from not correct to correct, corresponding to an increase in the number of correct pairs. 
Hence we have:

\begin{center}
$CP_{swap_{[i, j]}} - CP_{orig_{[i, j]}} = (j - i) = (\ell_j - \ell_i) (j - i)\ .$
\end{center} \qed
\end{proof}
Therefore by Lemmas 1 and 2, we have:
\begin{align*}
\Delta AUC_{ij} & = \frac{CP_{swap} - CP_{orig}}{mn} \\
                          & = \frac{CP_{swap_{[i,j]}} - CP_{orig_{[i,j]}}}{mn}  \\
                          & = \frac{(\ell_i - \ell_j)(j - i)}{mn}
\end{align*}
completing the proof of Theorem 1. \qed
\end{proof}
Applying the formula from Theorem 1 to the list of documents sorted by the current model scores, we define the $\lambda$-gradient for AUC as:
\begin{equation}
\lambda_{AUC_{ij}} =  S_{ij}\left|\Delta AUC_{ij}\frac{ \partial{C_{ij} } } { \partial{o_{ij}  }}\right|
\end{equation}
where $S_{ij}$ and $\frac{ \partial{C_{ij} } } { \partial{o_{ij}  } }$ are as defined previously, and $\Delta AUC_{ij} = \frac{(\ell_i - \ell_j)(j - i)}{mn}\ $.

\subsubsection{$\lambda$-$MAUC$}
To extend the $\lambda$-gradient for AUC to a multi-class setting, we consider the multi-class AUC definition found in equation \ref{eq:mauc}. 
Since MAUC is a linear combination of class-reference AUC values, to compute $\Delta MAUC_{ij}$ we can compute the change in each class-reference AUC value $\Delta AUC(c_k)$ separately using equation \ref{eq:deltaauc} and weight each $\Delta$ value by the proportion $p(c_k)$, giving:

\begin{equation}
\Delta MAUC_{ij}=\sum_{k=1}^{C} \Delta AUC(c_k)_{ij}*p(c_k)\ .
\end{equation}
Using this term and the previously defined terms $S_{ij}$ and $\frac{ \partial{C_{ij} } } { \partial{o_{ij}  }}$, we define the $\lambda $-gradient for $MAUC$ as:

\begin{equation}
\lambda_{MAUC_{ij}} = S_{ij}\left|\Delta MAUC_{ij}\frac{ \partial{C_{ij} } } { \partial{o_{ij}  }}\right|\ .
\end{equation}

\section {Experiments}

Experiments were conducted on binary-class datasets to compare the AUC performance of LambdaMART trained with the AUC $\lambda$-gradient, referred to as LambdaMART-AUC, against a baseline model. 
Similar experiments were conducted on multi-class datasets to compare LambdaMART trained with the MAUC $\lambda$-gradient, referred to as LambdaMART-MAUC, against a baseline in terms of MAUC. 
Differences in precision on the predicted rankings were also investigated.

The LambdaMART implementation used in the experiments was a modified version of the JForests learning to rank library \cite{Ganjisaffar2011b}. 
This library showed the best NDCG performance out of the available Java ranking libraries in preliminary experiments. 
We then implemented extensions required to compute the AUC and multi-class AUC $\lambda $-gradients. For parameter tuning, a learning rate was chosen for each dataset by searching over the values $\lbrace 0.1, 0.25, 0.5, 0.9 \rbrace$ and choosing the value that resulted in the best performance on a validation set. 

As the comparison baseline, we used a Support Vector Machine (SVM) formulated for optimizing AUC. 
The SVM implementation was provided by the SVM-Perf \cite{Joachims2005a} library.
The ROCArea loss function was used, and the regularization parameter $c$ was chosen by searching over the values $\lbrace 0.1, 1, 10, 100 \rbrace$ and choosing the value that resulted in the best performance on a validation set. 
For the multi-class setting, a binary classifier was trained for each individual relevance class. 
Prediction scores for a document $d$ were then generated by computing the quantity $\sum^{C}_{c=1}cf_{c}(d)$, where $C$ denotes the number of classes, and $f_c$ denotes the binary classifier for relevance class $c$. 
These scores were used to induce a ranking of documents for each query.

\subsection{Datasets}
For evaluating LambdaMART-AUC, we used six binary-class web-search datasets from the LETOR 3.0 \cite{Qin2010l} Gov dataset collection, named td2003, td2004, np2003, np2004, hp2003, and hp2004. 
Each dataset is divided into five folds and contains feature vectors representing query-document pairs and binary relevance labels.

For evaluating LambdaMART-MAUC, we used four multi-class web-search datasets:
versions 1.0 and 2.0 of the Yahoo! Learning to Rank Challenge \cite{Chapelle2011y} dataset, and the mq2007 and mq2008 datasets from the LETOR 4.0 \cite{Letor4} collection. 
The Yahoo! and LETOR datasets are divided into two and five folds, respectively. 
Each Yahoo! dataset has integer relevance scores ranging from 0 (not relevant) to 4 (very relevant), while the LETOR datasets have integer relevance scores ranging from 0 to 2. 
The LETOR datasets have 1700 and 800 queries, respectively, while the larger Yahoo! datasets have approximately 20,000 queries.

\subsection{Results}
\subsubsection{AUC}
On the binary-class datasets, LambdaMART-AUC and SVM-Perf performed similarly in terms of AUC and Mean-Average Precision. 
The results did not definitively show that either algorithm was superior on all datasets; LambdaMART-AUC had higher AUC scores on 2 datasets (td2003 and td2004), lower AUC scores on 3 datasets (hp2003, hp2004, np2004), and a similar score on np2003.
In terms of MAP, LambdaMART-AUC was higher on 2 datasets (td2003 and td2004), lower on 2 datasets (np2004, hp2004), and similar on 2 datasets (np2003, hp2003).
The results confirm that LambdaMART-AUC is an effective option for optimizing AUC on binary datasets, since the SVM model has previously been shown to perform effectively.

\subsubsection{MAUC}
Table \ref{table:mauc} shows the MAUC scores on held out test sets for the four multi-class datasets.
The reported value is the average MAUC across all dataset folds. 
The results indicate that in terms of optimizing Multi-class AUC, LambdaMART-MAUC is as effective as SVM-Perf on the LETOR datasets, and more effective on the larger Yahoo! datasets.
\begin{table*}
\centering
\caption{Summary of Multi-class AUC on test folds}\label{table:mauc}
\begin{tabular}{|c|c|c|c|c|}
\hline 
\rule[-1ex]{0pt}{3.5ex}   & Yahoo V1 & Yahoo V2 & mq2007 & mq2008 \\ 
\hline 
\rule[-1ex]{0pt}{2.5ex} LambdaMART-MAUC     & \textbf{0.594} & \textbf{0.592} & \textbf{0.662} & 0.734\\ 
\hline 
\rule[-1ex]{0pt}{2.5ex} SVM-Perf     & 0.576 & 0.576 & 0.659 & \textbf{0.737}\\ 
\hline 
\end{tabular}
\end{table*}
\begin{table*}
\centering
\caption{Summary of Mean Average Precision on test folds}\label{table:map}
\begin{tabular}{|c|c|c|c|c|}
\hline 
\rule[-1ex]{0pt}{3.5ex}   & Yahoo V1 & Yahoo V2 & mq2007 & mq2008 \\ 
\hline 
\rule[-1ex]{0pt}{2.5ex} LambdaMART-MAUC     & \textbf{0.862} & \textbf{0.858} & \textbf{0.466} & \textbf{0.474}\\ 
\hline 
\rule[-1ex]{0pt}{2.5ex} SVM-Perf     & 0.837 & 0.837 & 0.450 & 0.458\\ 
\hline 
\end{tabular} 
\end{table*}

Additionally, the experiments found that LambdaMART-MAUC outperformed SVM-Perf in terms of precision in all cases. 
Table \ref{table:map} shows the Mean Average Precision scores for the four datasets. LambdaMART-MAUC also had higher $Precision_{micro}@k$ and $Precision_{macro}@k$ on all datasets, for $k=1,... ,10$. 
For instance, Figure \ref{fig:plots} shows the values of $Precision_{micro}@k$ and $Precision_{macro}@k$ for the Yahoo! V1 dataset.

The class-reference AUC scores indicate that LambdaMART-MAUC and SVM-Perf arrive at their MAUC scores in different ways. 
LambdaMART-MAUC focuses on the most prevalent class; each $\Delta AUC(c_i)$ term for a prevalent class receives a higher weighting than for a rare class due to the $p(c_i)$ term in the $\lambda_{MAUC}$ computation. 
As a result the $\lambda$-gradients in LambdaMART-MAUC place more emphasis on achieving a high $AUC(c_1)$ than a high $AUC(c_4)$. 
Table \ref{table:aucs} shows the class-reference AUC scores for the Yahoo! V1 dataset. 
We observe that LambdaMART-MAUC produces better $AUC(c_1)$ than SVM-Perf, but worse $AUC(c_4)$, since class 1 is much more prevalent than class 4; 48\% of the documents in the training set with a positive label have a label of class 1, while only 2.5\% have a label of class 4. 

Finally, we note that on the large-scale Microsoft Learning to Rank Dataset MSLR-WEB10k \cite{Mslrdataset}, the SVM-Perf training failed to converge on a single fold after 12 hours. 
Therefore training a model for each class for every fold was impractical using SVM-Perf, while LambdaMART-MAUC was able to train on all five folds in less than 5 hours. 
This further suggests that LambdaMART-MAUC is preferable to SVM-Perf for optimizing MAUC on large ranking datasets.\\[1.2in]

\begin{figure}
\centering
\includegraphics[natwidth=1152,width=170px]{./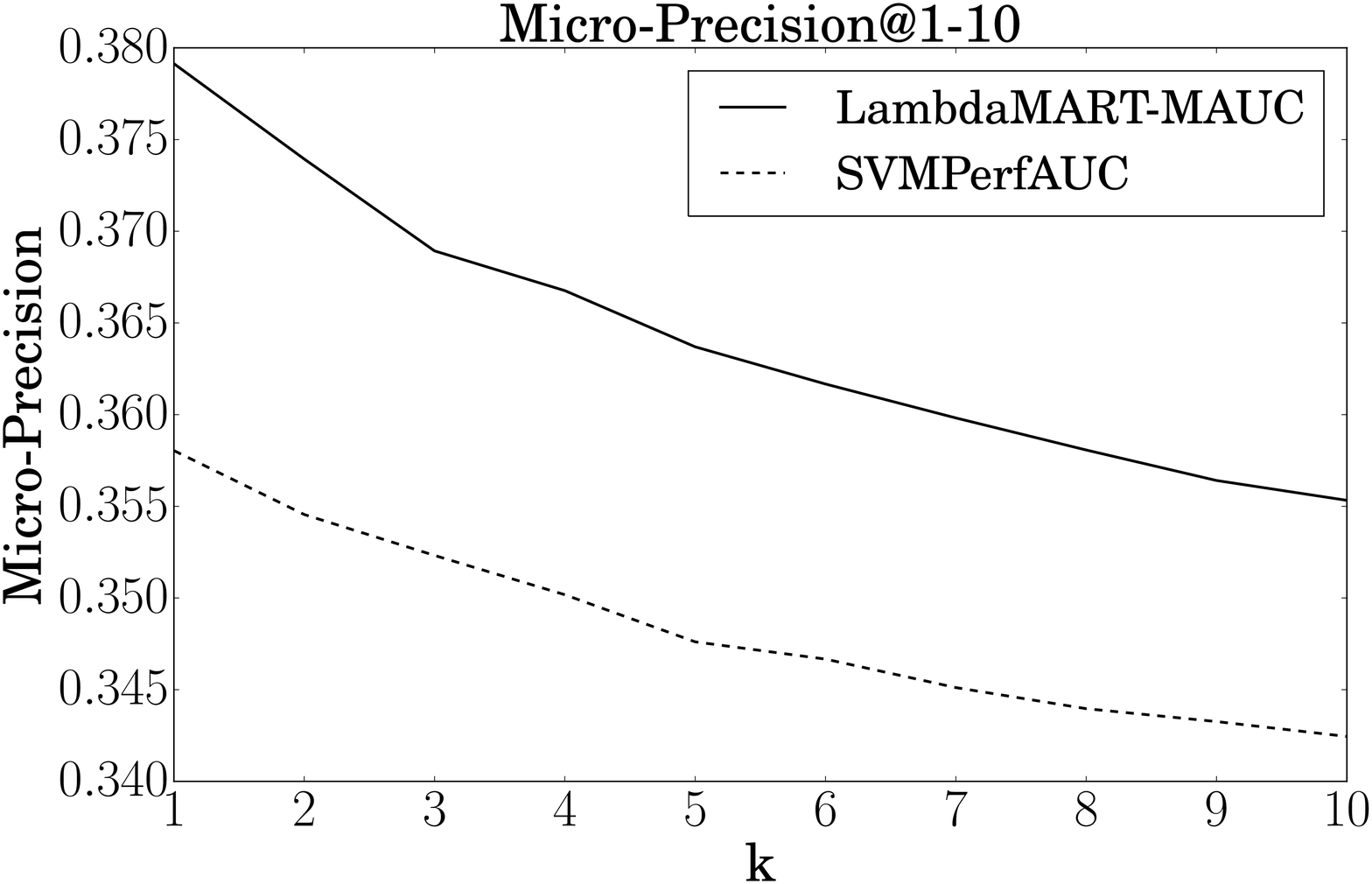}
\includegraphics[natwidth=1152,width=170px]{./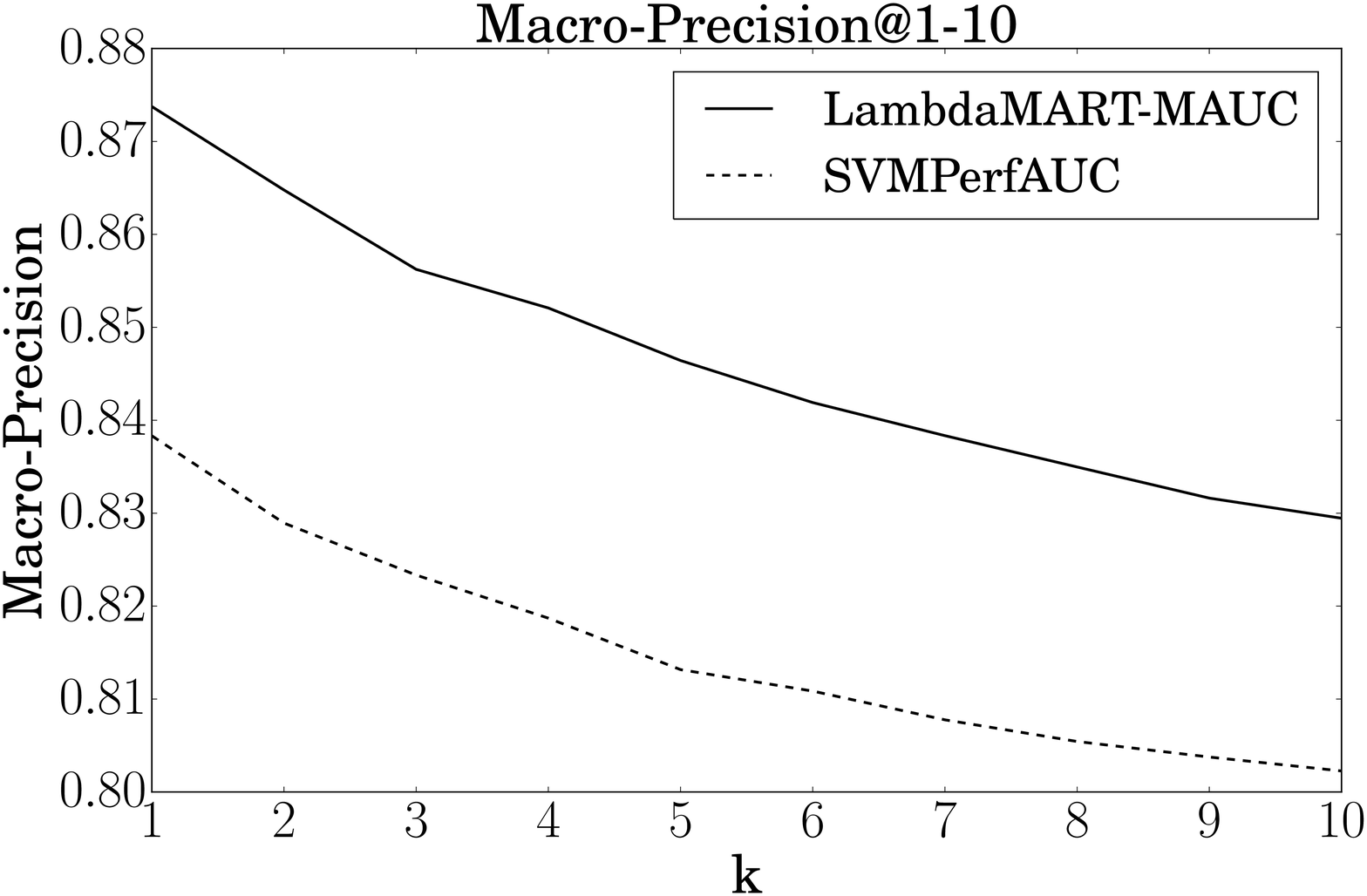} 
\caption{Micro and Macro Precision@1-10 on the Yahoo! V1 test folds}
\label{fig:plots}
\end{figure}

\begin{table}
\centering
\caption{Summary of class-reference AUC scores on the Yahoo! V1 test folds}\label{table:aucs}
\begin{tabular}{|c|c|c|c|c|}
\hline 
\rule[-1ex]{0pt}{3.5ex}   & $AUC_1$ & $AUC_2$ & $AUC_3$ & $AUC_4$ \\ 
\hline 
\rule[-1ex]{0pt}{2.5ex} LambdaMART-MAUC     & \textbf{0.503} & \textbf{0.690} & 0.757 & 0.831\\ 
\hline 
\rule[-1ex]{0pt}{2.5ex} SVM-Perf     & 0.474 & 0.682 & \textbf{0.796} & \textbf{0.920}\\ 
\hline 
\end{tabular} 
\end{table}

\section{Conclusions}
We have introduced a method for optimizing AUC on ranking datasets using a gradient-boosting framework. 
Specifically, we have derived gradient approximations for optimizing AUC with LambdaMART in binary and multi-class settings, and shown that the gradients are efficient to compute. 
The experiments show that the method performs as well as, or better than, a baseline SVM method, and performs especially well on large, multi-class datasets. 
In addition to adding LambdaMART to the portfolio of algorithms that can be used to optimize AUC, our extensions expand the set of IR metrics for which LambdaMART can be used.

There are several possible future directions. 
One is investigating local optimality of the solution produced by LambdaMART-AUC using Monte Carlo methods. 
Other directions include exploring LambdaMART with multiple objective functions to optimize AUC, and creating an extension to optimize area under a Precision-Recall curve rather than an ROC curve.

\section*{Acknowledgements}
Thank you to Dwi Sianto Mansjur for giving helpful guidance and providing valuable comments about this paper.
%
%

\bibliographystyle{splncs03}
\bibliography{refs}

\begin{thebibliography}{10}
\providecommand{\url}[1]{\texttt{#1}}
\providecommand{\urlprefix}{URL }

\bibitem{Burges2005RankNet}
Burges, C., Shaked, T., Renshaw, E., Lazier, A., Deeds, M., Hamilton, N.,
  Hullender, G.: Learning to rank using gradient descent. In: Proceedings of
  the 22Nd International Conference on Machine Learning. pp. 89--96. ICML '05,
  ACM, New York, NY, USA (2005)

\bibitem{Burges2010f}
Burges, C.J.: From ranknet to lambdarank to lambdamart: An overview. Learning
  11,  23--581 (2010)

\bibitem{Calders2007k}
Calders, T., Jaroszewicz, S.: Efficient auc optimization for classification.
  Knowledge Discovery in Databases: PKDD 2007 pp. 42--53 (2007)

\bibitem{Chapelle2011y}
Chapelle, O., Chang, Y.: Yahoo! learning to rank challenge overview (2011)

\bibitem{Cortes2004a}
Cortes, C., Mohri, M.: Auc optimization vs. error rate minimization. Advances
  in neural information processing systems  16(16),  313--320 (2004)

\bibitem{Donmez2009o}
Donmez, P., Svore, K., Burges, C.J.: On the optimality of lambdarank. Tech.
  Rep. MSR-TR-2008-179, Microsoft Research (November 2008),
  \url{http://research.microsoft.com/apps/pubs/default.aspx?id=76530}

\bibitem{Fawcett2006a}
Fawcett, T.: An introduction to roc analysis. Pattern Recogn. Lett.  27(8),
  861--874 (Jun 2006)

\bibitem{Ganjisaffar2011b}
Ganjisaffar, Y., Caruana, R., Lopes, C.V.: Bagging gradient-boosted trees for
  high precision, low variance ranking models. In: Proceedings of the 34th
  international ACM SIGIR conference on Research and development in Information
  Retrieval. pp. 85--94. ACM (2011)

\bibitem{Joachims2005a}
Joachims, T.: A support vector method for multivariate performance measures.
  In: Proceedings of the 22Nd International Conference on Machine Learning. pp.
  377--384. ICML '05, ACM, New York, NY, USA (2005)

\bibitem{Manning2008I}
Manning, C., Raghavan, P., Sch{\"u}tze, H.: Introduction to Information
  Retrieval. Cambridge University Press (2008)

\bibitem{Mslrdataset}
Microsoft learning to rank datasets,
  \url{http://research.microsoft.com/en-us/projects/mslr/}

\bibitem{Letor4}
Qin, T., Liu, T.: Introducing {LETOR} 4.0 datasets. CoRR  abs/1306.2597 (2013),
  \url{http://arxiv.org/abs/1306.2597}

\bibitem{Qin2010l}
Qin, T., Liu, T.Y., Xu, J., Li, H.: Letor: A benchmark collection for research
  on learning to rank for information retrieval. Inf. Retr.  13(4),  346--374
  (Aug 2010), \url{http://dx.doi.org/10.1007/s10791-009-9123-y}

\bibitem{Svore2011l}
Svore, K.M., Volkovs, M.N., Burges, C.J.: Learning to rank with multiple
  objective functions. In: Proceedings of the 20th international conference on
  World wide web. pp. 367--376. ACM (2011)

\bibitem{Wu2010a}
Wu, Q., Burges, C.J., Svore, K.M., Gao, J.: Adapting boosting for information
  retrieval measures. Information Retrieval  13(3),  254--270 (2010)

\end{thebibliography}

\end{document}